\documentclass[conference,10pt]{IEEEtran}
\IEEEoverridecommandlockouts
\usepackage[T1]{fontenc}
\usepackage[utf8]{inputenc}
\usepackage{amsmath,amssymb,amsthm}
\usepackage{mathtools}
\usepackage{bm}
\usepackage{cite}
\usepackage{graphicx}
\usepackage{booktabs}
\usepackage{array}
\usepackage{xcolor}
\usepackage{url}
\usepackage{microtype}
\usepackage{balance}
\usepackage{multirow}
\usepackage{algorithm}
\usepackage{algpseudocode}
\usepackage[hidelinks,bookmarks=false]{hyperref}

\newtheorem{theorem}{Theorem}
\newtheorem{corollary}[theorem]{Corollary}

\theoremstyle{definition}
\newtheorem{definition}{Definition}

\DeclareMathOperator{\E}{\mathbb{E}}

\newcommand{\R}{\mathbb{R}}
\newcommand{\C}{\mathbb{C}}
\newcommand{\norm}[1]{\left\lVert#1\right\rVert}
\newcommand{\abs}[1]{\left\lvert#1\right\rvert}
\newcommand{\tri}{\mathrm{TRI}}

\newcommand{\PD}{\mathsf{D}}
\newcommand{\loss}{\mathcal{L}}

\begin{document}

\title{Resilience Characterization of AI-Native Wireless Receivers via Persistent Homology\vspace{-4mm}}\vspace{-3mm}

\author{%
  \IEEEauthorblockN{Christo Kurisummoottil Thomas\IEEEauthorrefmark{1} and  Emilio Calvanese Strinati\IEEEauthorrefmark{2}}
  \IEEEauthorblockA{%
    \IEEEauthorrefmark{1}\textit{Department of Electrical and Computer Engineering},
    \textit{Worcester Polytechnic Institute},
    Worcester, MA, USA\\ \IEEEauthorrefmark{2}\textit{CEA-Leti},
    Grenoble, France\\
    Emails:\{cthomas2@wpi.edu, emilio.calvanese-strinati@cea.fr\}
 \vspace{-4mm} }
}
\vspace{-4mm}
\maketitle

\vspace{-2mm}\begin{abstract}
AI-native wireless receivers based on deep learning exhibit remarkable
performance under stationary channel conditions, yet their resilience to
distributional shifts remains poorly characterized by conventional
metrics such as bit error rate~(BER). To overcome such limitations of state of the art, this paper proposes a novel real-time metric
\emph{Topological Resilience Index}~(TRI),   grounded in
persistent homology and persistence exponents. In particular, TRI  quantifies the structural
stability of a neural network receiver's parameter space during online
adaptation to non-stationary channels. Specifically, TRI captures resilience via three complementary dimensions: (i) validation-loss resilience measuring model-channel mismatch, grounded theoretically in the topological persistence of loss landscape sub-level sets; (ii) channel impulse response (CIR) distribution shift, tracking geometric drift of CIR vectors from the calibration reference distribution; and (iii) channel manifold topology, quantified by the spectral gap of the Gaussian kernel matrix normalized by the Ollivier-Ricci curvature norm.  We establish theoretical
guarantees showing that TRI is bounded, monotonic under performance
degradation, and Lipschitz-stable with respect to perturbations in channel
distribution measured in Wasserstein-2 distance. Simulation results for an OFDM deep-learning receiver adapting across ten ITU-R inter-environment transitions at three shift rates demonstrate that TRI provides a consistent mean warning lead of $1.0 \pm 0.2$
 OFDM symbols over gradient-norm and validation-loss baselines at $\lambda=1.0$, with the gradient-norm baseline achieving zero lead in every scenario. Furthermore, the proposed TRI-guided burst re-adaptation reduces post-shift BER by $80\%$ relative to no adaptation within $200$ OFDM symbols. 
\end{abstract}


\vspace{-2mm}\section{Introduction}\vspace{-2mm}
\label{sec:intro}

The emergence of AI-native physical layer represents a paradigm shift in
wireless communications. Rather than handcrafting signal-processing chains
from domain knowledge, AI-native receivers (RXs) \cite{oshea2017} employ deep neural networks~(DNNs)
trained end-to-end to optimize throughput, latency, and reliability.
Deep learning-based OFDM demodulators, channel estimators, and beam-management
modules have demonstrated state-of-the-art performance in standardized
benchmarks, prompting their inclusion in the 6G standardization roadmap under
the IMT-2030 framework~\cite{itu2030}.
A fundamental challenge, however, is \emph{deployment robustness}. Neural networks
trained on one channel distribution, example, dense urban multipath, may
experience rapid performance collapse when deployed in environments with
substantially different statistics, such as rural line-of-sight propagation
or high-mobility vehicular fading \cite{liu2026universal}. When the channel shifts unexpectedly, the RX's learned weights become mismatched to the new environment, causing bit-error-rate (BER) to spike before adaptation can respond. Traditional steady-state BER evaluation~\cite{oshea2017} captures none of this transient degradation, motivating the need for a \emph{real-time resilience metric}.

\vspace{-0mm}\subsection{Related Works}\vspace{-2mm}
Network resilience has traditionally been defined as the ability to maintain acceptable performance through and recover from adverse events \cite{reifert2023comeback,shui2024resilient,mahmood2025resilient,matthiesen2025resilient}. In \cite{reifert2023comeback},  resilience is formalized as a performance recovery metric in mixed-critical wireless resource management, quantifyied by the area under the post-disruption performance curve. While this provides a rigorous post-hoc characterization of recovery, it is inherently reactive, as the metric is computed after degradation and cannot anticipate or accelerate recovery. The authors in ~\cite{matthiesen2025resilient} show that statistical ML methods cannot prepare for rare distributional shift events, since such events contribute negligibly to the empirical risk functional, necessitating online monitoring. Distributional shift in deployed neural RXs has received increasing attention \cite{YangEtAl2020DLTransferFDD,RavivEtAl2023OnlineMetaLearning,yu2023adaptive},  manifesting as degraded channel estimation and equalization when deployment conditions differ from training. Existing approaches, including transfer learning~\cite{YangEtAl2020DLTransferFDD}, meta-learning~\cite{RavivEtAl2023OnlineMetaLearning}, and domain adaptation~\cite{yu2023adaptive}, mitigate this mismatch but do not provide a continuous online signal to quantify shift severity before degradation occurs. General-purpose out-of-distribution detection  detection methods such as gradient-norm scoring \cite{huang2021importance} and statistical tests including maximum mean discrepancy (MMD) \cite{kalinke2022mmd} and cummulative sum (CUSUM) \cite{gong2022cusum} have been proposed for neural network monitoring, but remain unadapted to the structure of AI-native wireless RXs and lack mechanisms to quantify shift severity in terms of channel geometry.

Unlike stability certificates from adaptive control theory~\cite{khalil2002nonlinear} or online regret bounds~\cite{hazan2016}, which characterize performance within a fixed or slowly varying distribution, existing frameworks do not account for the \emph{topological evolution of the channel manifold or loss landscape geometry} during online adaptation, nor do they provide distributional-shift-aware resilience signals without prior knowledge of the target channel distribution~\cite{BenDavidEtAl2010DomainAdaptation}. The geometry of loss landscapes critically influences the
efficiency and stability of gradient-based learning~\cite{li2018landscape},
yet no metric systematically exploits this structure for resilience
quantification in wireless systems.
Topological data analysis~(TDA) \cite{barbarossa2020topological}, and persistent homology \cite{edelsbrunner2008persistent} in particular, offers
precisely the mathematical tools needed. Persistence exponents (PEs), which govern the power-law decay of topological features as a filtration parameter varies, encode the stability of critical points, saddle structures, and connectivity of sub-level sets. These quantities remain stable under small perturbations of the underlying function~\cite{edelsbrunner2010}, making them ideally suited for resilience measurement in the noisy, non-stationary wireless environment.
\vspace{-3mm}\subsection{Our Contributions}\vspace{-2mm} The main contribution of this paper is a \emph{principled, topology-aware resilience metric for AI-native wireless RXs} that provides  real-time warning of channel distribution shifts before BER degradation, enabling proactive adaptation without oracle knowledge of the shift time. We introduce the \emph{topological resilience index (TRI)}, defined as a weighted combination of PEs computed over the DNN parameter space, loss-landscape sub-level filtrations, and the channel-state manifold. We show that TRI is bounded in $[0,1]$, is monotonic under performance degradation, and is Lipschitz-stable to channel distribution perturbations in Wasserstein-2 distance, with a formal link to adaptation excess risk. Finally, we validate TRI 
on an OFDM deep-learning RX adapting across ten ITU-R 
inter-environment transitions \cite{3gpp_tr38901} at three shift rates, demonstrating 
statistically consistent early warning of impending demodulation 
failure where gradient-norm and validation-loss baselines provide 
no advance warning, and showing that TRI-triggered burst 
re-adaptation enables closed-loop detection and recovery.

\vspace{-3mm}\section{System Model}
\label{sec:system}

\vspace{-2mm}\subsection{AI-Native OFDM RX}

We consider an OFDM system with $N$ subcarriers, cyclic prefix length $L_{\rm
cp}$, and $M$-QAM modulation over a time-varying multipath fading channel.
The received frequency-domain signal at subcarrier $k$ and OFDM symbol $t$ is
\vspace{-1mm}\begin{equation}
  Y[k,t] = H[k,t]\,X[k,t] + W[k,t],\;
  k{=}1,\ldots,N,\; t{=}1,\ldots,T,\!
  \label{eq:system}
\vspace{-1mm}\end{equation}
where $X[k,t]$ is the transmitted QAM symbol, $H[k,t]\in\C$ the complex
channel response, and $W[k,t]\sim\mathcal{CN}(0,\sigma^2)$ is
circularly symmetric additive Gaussian noise with power $\sigma^2$.
The AI-native RX is any DNN
$f_{\boldsymbol{\theta}} : \mathbb{C}^{N} \to \{0,\ldots,M-1\}^{N}$
parameterized by $\boldsymbol{\theta} \in \mathbb{R}^{d}$, mapping received OFDM frames to symbol decisions.
Pilot subcarriers occupy every fourth position, giving pilot set
$\mathcal{K}_p \subset \{1,\ldots,N\}$ with $|\mathcal{K}_p|=N_p=N/4$, and 
the remaining subcarriers carry data used for inference only. consistent with dense pilot 
configurations in 3GPP NR for high-mobility scenarios~\cite{3gpp_tr38901}. 
This $25\%$ density provides sufficient gradient signal per 
symbol for stable online adaptation while remaining within 
practical overhead bounds.

Online adaptation at each symbol period uses the \emph{pilot subcarriers
exclusively} as a supervised training signal, since only at pilot positions
is the transmitted symbol $X[k,t]$ known to the RX. The per-OFDM-symbol
supervised loss is
\vspace{-1mm}\begin{equation}
  \loss(\bm{\theta}_t;\mathcal{B}_t)
  = -\frac{1}{N_p}\sum_{k\in\mathcal{K}_p}
    \log\bigl[f_{\bm{\theta}_t}(Y[\cdot,t])\bigr]_{k,\,X[k,t]},
  \label{eq:loss}
\vspace{-2mm}\end{equation}
where $[\cdot]_{k,c}$ denotes the softmax probability assigned to class $c$
at subcarrier $k$, and $\mathcal{B}_t=\{(Y[k,t],X[k,t]):k\in\mathcal{K}_p\}$
is the pilot observation set. We assume the DNN loss $\mathcal{L}(\boldsymbol{\theta}; \mathcal{B}_t)$ is differentiable in $\boldsymbol{\theta}
$, and that online adaptation follows the SGD update~\eqref{eq:sgd}; both conditions hold for standard deep RX architectures~\cite{oshea2017,dorner2018}. The SGD update with momentum is:
\vspace{-2mm}\begin{equation}
  \bm{\theta}_{t+1}
  = \bm{\theta}_t
    - \eta\,\nabla_{\bm{\theta}}\loss(\bm{\theta}_t;\mathcal{B}_t)
    + \mu\,(\bm{\theta}_t - \bm{\theta}_{t-1}),
  \label{eq:sgd}
\vspace{-1mm}\end{equation}
where $\eta$ is the learning rate and
$\mu$ is the momentum coefficient.
For monitoring purposes, we track the \emph{windowed validation loss}
$
  \loss_{\rm val}(t) = \frac{1}{W}\sum_{\tau=t-W+1}^{t}
  \loss(\bm{\theta}_\tau;\mathcal{B}_\tau),$
which smooths mini-batch noise while
remaining responsive to distributional shift.

\vspace{-0mm}\subsection{Channel Distribution Shift Model}\vspace{-1mm}

The channel $H[k,t]$ is drawn from a time-varying joint distribution
$\mathcal{P}_t$ over power delay profiles and Doppler spectra consistent with
ITU-R channel models \cite{series2017imt2020}. We model distributional shift as a Markov transition
from source distribution $\mathcal{P}_S$ to target $\mathcal{P}_T$ at
switching time $t^*$, parameterized by transition rate $\lambda>0$:
\vspace{-3mm}\begin{equation}
  \mathcal{P}_t
  = (1-\alpha_t)\,\mathcal{P}_S + \alpha_t\,\mathcal{P}_T,
  \quad \alpha_t = 1 - e^{-\lambda(t-t^*)_+},
  \label{eq:shift}
\vspace{-2mm}\end{equation}
where $(x)_+ = \max(x,0)$. This model subsumes abrupt shifts
($\lambda\!\to\!\infty$) and gradual drifts ($\lambda\!\to\!0$).  
To quantify shift severity in a geometry that respects the metric 
structure of the channel space, we use the Wasserstein-2 distance 
$\delta = \mathcal{W}_2(\mathcal{P}_S, \mathcal{P}_T)$, defined as 
the square root of the optimal transport cost 
$\inf\limits_{\pi \in \Pi(\mathcal{P}_S, \mathcal{P}_T)} 
\mathbb{E}_{\pi}[\|\mathbf{h} - \mathbf{h}'\|^2]$, 
where $\Pi(\mathcal{P}_S, \mathcal{P}_T)$ denotes the set of all 
couplings of $\mathcal{P}_S$ and $\mathcal{P}_T$, and   $\mathbf{h}^T \in \mathbb{C}^L$ is the channel impulse response (CIR) vector represented by concatenated real and imaginary parts.

\vspace{-2mm}\subsection{From Channel Shift to Topological Signatures}\vspace{-1mm}
We first introduce briefly persistent homology core concepts.
\subsubsection{Background on persistent homology}
A simplicial complex over a point cloud $\mathcal{X}$ is a collection of subsets of $\mathcal{X}$ which include individual points (vertices), pairs (edges), triples (triangles), and higher-order subsets such that every subset of an included set is also included.
For a point cloud $\mathcal{X}$, a \emph{filtration} is a nested
sequence of simplicial complexes built by gradually increasing a
distance threshold $\varepsilon$: at each $\varepsilon$, points
within distance $\varepsilon$ of each other are connected. For a finite point cloud $\mathcal{X}\subset\R^d$, the Vietoris-Rips
filtration $\{\mathrm{VR}(\mathcal{X},\varepsilon)\}_\varepsilon$ adds an
edge between $\mathbf{x}_i,\mathbf{x}_j$ whenever
$\norm{\mathbf{x}_i-\mathbf{x}_j}\leq\varepsilon$.
As $\varepsilon$ grows, topological features appear (\emph{birth})
and later disappear (\emph{death}) as they merge with larger
structures. The \emph{persistence diagram} $\PD(\mathcal{X})$
records each topological feature as a point $(b_i, d_i)$ in the birth-death
plane, where $b_i$
 and $d_i$
 are the values of $\varepsilon
$ at which the feature appears and disappears respectively. Its \emph{lifetime} $\ell_i = d_i - b_i$ measures how
long the feature survives across scales \cite{edelsbrunner2010}.
Two feature types are relevant here: $H_0$ features are
\emph{connected components} (clusters of nearby points),
and $H_1$ features are \emph{loops} (closed cycles in the
point cloud). The \emph{PE} $\beta_p$, where $p \in \{0,1\}$
indexes the feature dimension, $p{=}0$ for components, $p{=}1$
for loops, characterizes how quickly lifetimes decay:
$\Pr(\ell > \varepsilon) \sim C_p\,\varepsilon^{-\beta_p}$.
A large $\beta_0$ indicates many short-lived components
(fragmented landscape) and a large $\beta_1$ indicates
many transient loops signaling erratic, non-convergent trajectory geometry under distributional shift. 
\subsubsection{Topological signatures of channel shift}

BER is a poor early indicator of RX demodulation failure under channel shift because it is an output metric. By the time symbol decisions degrade, the loss landscape has already fragmented and the parameters have drifted from the pre-shift basin. Instead, we need a metric that probes the internal optimization geometry and detects instability before it affects the output. A distributional shift leaves three geometric footprints as below, each tied to a different system component and timescale
\vspace{-2mm}\begin{enumerate}
  \item[(F1)] \emph{Loss landscape fragmentation.} Under $\mathcal{P}_S$, the loss surface 
$\mathcal{L}(\cdot;\mathcal{P}_S)$ is well-conditioned around 
$\bm{\theta}_t$. This means that its sub-level sets 
$\{\bm{\theta} : \mathcal{L}(\bm{\theta};\mathcal{P}_S) \leq c\}$ 
form a single connected region containing a unique local minimum (also called basin), 
so gradient descent initialized anywhere in this region converges 
reliably to $\bm{\theta}_t$. When $\mathcal{P}_T$ is introduced, the loss surface develops additional local minima and saddle points within this region, fragmenting it into disconnected sub-level sets whose gradients pull $\bm{\theta}_t$ in conflicting directions. This restructuring begins at $t=t^*$
 and intensifies as $\alpha_t$ increases, but its effect on symbol decisions accumulates only gradually, causing BER to lag behind the geometric changes already occurring in the loss landscape.
  \item[(F2)] \emph{Trajectory complexity.} As $\bm{\theta}_t$ adapts under the mixture 
distribution~\eqref{eq:shift}, the gradient signals from 
$\mathcal{P}_S$ and $\mathcal{P}_T$ point in conflicting 
directions, causing the parameter trajectory to oscillate 
rather than converge smoothly. This instability appears 
before the loss landscape itself fragments.
  \item[(F3)] \emph{Channel manifold distortion.} The CIR vectors $\{\mathbf{h}_i^T\} \subset \mathbb{R}^{2L}$
concentrate in a single compact region of the channel space under $\mathcal{P}_S$, reflecting its characteristic power delay profile and Doppler spectrum. As $\alpha_t$ increases, realizations drawn from $\mathcal{P}_T$, which has a distinct power delay profile and Doppler spectrum, occupy a geometrically separate region, distorting the manifold and signaling the shift at the channel-statistics level before gradients respond.
\vspace{-2mm}\end{enumerate}

To capture all three footprints jointly, we model each
as a \emph{topological point cloud} at adaptation step $t$. Specifically, $\mathcal{X}_{\rm LS}(t) \subset \mathbb{R}^2$ comprises $n_s$ pairs $(|\delta|, \loss(\bm{\theta}_t{+}\delta;\mathcal{B}_t))$ sampled from a Gaussian neighborhood of $\bm{\theta}_t$, capturing (F1). The set $\mathcal{X}_{\rm PM}(t) \subset \mathbb{R}^m$ is the PCA projection of ${\bm{\theta}_1,\ldots,\bm{\theta}_t}$ onto $m$ principal components, capturing (F2). The set $\mathcal{X}_{\rm CM}(t) \subset \mathbb{R}^{2L}$ consists of the $N_T$ most recent CIR vectors $\{\mathbf{h}_i^T\}_{i=1}^{N_T}$, and $N_T = 100$, capturing (F3).
Classical point-cloud descriptors such as mean, covariance, and 
spectral radius capture only the average behavior of the channel 
vectors and miss the finer structural changes, such as the 
emergence of new propagation clusters or the splitting of an 
existing cluster into two, that occur as the channel 
distribution shifts from $\mathcal{P}_S$ to $\mathcal{P}_T$. Persistent homology instead tracks how propagation clusters 
form, merge, and dissolve as the filtration threshold varies, 
capturing the full structural evolution of the channel point 
cloud in a single diagram whose output changes by at most 
as much as the underlying channel vectors are 
perturbed~\cite{chazal2016}. 
Because $\beta_p$ changes as
soon as the point cloud topology changes, and not after a threshold number
of gradient steps, it provides the early-warning property we seek \footnote{\vspace{-0mm}\scriptsize
Each point cloud encodes a distinct geometric footprint via a 
specific topological feature type: $H_0$ components in 
$\mathcal{X}_{\rm LS}$ are distinct loss basins (fragmentation 
under $\mathcal{P}_T$); $H_1$ loops in $\mathcal{X}_{\rm PM}$ 
are cycles in the parameter trajectory (oscillatory, 
non-convergent adaptation); and $H_0$ components in 
$\mathcal{X}_{\rm CM}$ are clusters of statistically similar 
channel states (arrival of a new propagation regime). This 
feature-type assignment also explains the timescale ordering: 
$\mathcal{X}_{\rm PM}$ responds first at the gradient level, 
then $\mathcal{X}_{\rm LS}$ at the landscape level, then 
$\mathcal{X}_{\rm CM}$ at the channel-statistics level.}.

\vspace{-0mm}\subsubsection{PE estimation}
Given lifetimes $\{\ell_i^{(p)}\}$ extracted from the persistence diagram
$\PD_p(\mathcal{X})$, the empirical exponent is estimated via log-log
regression of the complementary CDF, evaluated at the 80th-percentile
reference point $\varepsilon_0$:
\vspace{-2mm}\begin{equation}
  \hat\beta_p = -\frac{d}{d\log\varepsilon}
    \log\widehat{\Pr}(\ell_i^{(p)}>\varepsilon)\bigg|_{\varepsilon=\varepsilon_0},
  \label{eq:beta_est}
\end{equation}
using ordinary least squares over
$\log\varepsilon\in[\varepsilon_0/2,\,2\varepsilon_0]$.

\vspace{-2mm}\section{Topological Resilience Index}
\label{sec:tri}
The three geometric footprints (F1)--(F3) arise at different 
timescales yet each reflects the same underlying phenomenon, 
namely the RX's optimization geometry degrading under 
distributional shift. We now formalize TRI as a weighted combination of 
scale-invariant topological descriptors extracted from 
$\mathcal{X}_{\rm LS}$, $\mathcal{X}_{\rm PM}$, and 
$\mathcal{X}_{\rm CM}$, with formal guarantees. 

\vspace{-2mm}\subsection{TRI Definition}
\vspace{-1mm}\begin{definition}
\label{def:resilience}
An AI-native RX $f_{\bm{\theta}}$ is \emph{$(\epsilon,\tau)$-resilient}
under channel shift $(\mathcal{P}_S\!\to\!\mathcal{P}_T,\lambda)$ if the following holds:
\begin{enumerate}
  \item[(R1)] \emph{Performance bound:}
    $\mathrm{BER}(t)\leq\epsilon$ for all $t\geq t^*+\tau$, and
  \item[(R2)] \emph{Recovery time:}
    $\inf\{s\geq 0:\mathrm{BER}(t^*{+}s)\leq\epsilon\}\leq\tau$.
\end{enumerate}
\vspace{-1mm}\end{definition}
BER is a post hoc performance metric and thus cannot anticipate whether conditions (R1) and (R2) will hold. In contrast, the TRI quantifies the \emph{structural preconditions} for resilience, namely the preservation of the underlying optimization geometry required for (R1) and (R2) to remain valid.
\begin{definition}
  Let $\bm{\theta}_t$ denote the DNN parameter iterate at adaptation step $t$.
  The \emph{TRI} is defined as:
  \begin{equation}
    \tri(t) \;=\; w_1\,\Phi_{\rm LS}(t)
             \;+\; w_2\,\Phi_{\rm PM}(t)
             \;+\; w_3\,\Phi_{\rm CM}(t),
    \label{eq:tri}
  \end{equation}
  with $\sum_i w_i=1$, $w_i\geq 0$, and components defined
  in~\eqref{eq:phi_LS}--\eqref{eq:phi_cm}.
\end{definition}
The \emph{loss landscape component} $\Phi_{\rm LS}$ is defined from the point cloud $\mathcal{X}_{\rm LS}$ under sub-level set filtration, where $H_0$ captures connected components of loss basins. The PE $\beta_0(t)$ becomes unbounded under fragmentation, increasing as additional local minima emerge. An exponential map thus compresses it onto $(0,1]$:
\vspace{-2mm}\begin{equation}
{\Phi}_{LS}(t) = \frac{\exp(-\beta_0(t)/\beta_0^{\text{ref}})}{\exp(-\beta_0(0)/\beta_0^{\text{ref}})} \in (0,1],
\label{eq:phi_LS}\end{equation}
 where $\beta_0^{\text{ref}}$ is computed once under $\mathcal{P}_S$ at convergence and division by the $t=0$ value normalizes the pre-shift baseline to unity. Fragmentation drives $\beta_0(t) \gg \beta_0^{\text{ref}}$
 and $\tilde{\Phi}_{LS} \to 0$.



The \emph{parameter manifold component} $\Phi_{\rm PM}$ is defined from the PCA-projected trajectory $\mathcal{X}_{\rm PM}$, where $H_1$ captures loops indicating oscillatory, non-convergent adaptation. The loop exponent $\beta_1(t)$ is bounded by the PCA dimension $m$ and step size $\eta$, so a ratio form yields finer resolution than an exponential:
\vspace{-2mm}\begin{equation}
\Phi_{PM}(t) = \frac{1 + \beta_1^{\mathrm{ref}}}{1 + \beta_1(t)} \in (0,1].
\label{eq:phi_pm}
\end{equation}
Smooth adaptation yields low $\beta_1$, while distributional shift induces topologically complex trajectories with high $\beta_1$. This component responds early, as trajectory geometry destabilizes before loss landscape fragmentation.

The \emph{channel manifold component} $\Phi_{\rm CM}$ is defined from the point cloud $\mathcal{X}_{\rm CM}$, which consists of raw CIR observations rather than evaluations of a scalar function, rendering sub-level-set filtrations and PEs inapplicable. Instead, two complementary geometric properties are used: \emph{global clustering}, captured by the spectral gap $\sigma_{\text{gap}}(K_t)$ of the Gaussian kernel matrix $[\mathbf{K}_t]_{ij}=\exp(-|\mathbf{h}_i^T-\mathbf{h}j^T|^2/2\gamma^2)$, and \emph{local distortion}, captured by the Ollivier--Ricci curvature norm $|\mathrm{Curv}(M_t)|_{F}$. Here, $M_t$ denote the 
$k$-nearest-neighbor graph constructed on $\{\mathbf{h}_i^T\}_{i=1}^{N_T}$, where each vertex is a channel vector and edges connect each vector to its
 $k$ nearest neighbors in Euclidean distance. These combine naturally as
\vspace{-2mm}\begin{equation}
\Phi_{CM}(t)
  = \frac{\sigma_{\text{gap}}(K_t)}{1 + \|\mathrm{Curv}(M_t)\|_{F}} \frac{\sigma_{\text{gap}}(K_0)}{1 + \|\mathrm{Curv}(M_0)\|_{F}},
  \label{eq:phi_cm}
\end{equation}
normalized to $[0,1]$ by its value at $t=0$. A drop in spectral gap signals
that the tight cluster of channel vectors under $\mathcal{P}_S$ is dispersing,
while a rise in Ricci curvature tensor $\mathrm{Curv}(\mathcal{M}_t)\in\R^{N_T\times N_T}$ \cite{ollivier2009} indicates rapid local geometric change as
$\mathcal{P}_T$ vectors populate a geometrically separated region of
$\mathbb{R}^{2L}$.

By construction, each component 
${\Phi}_{LS},\, \Phi_{PM},\, \Phi_{CM} \in (0,1]$, 
and the convex combination in~\eqref{eq:tri} 
therefore satisfies 
$\mathrm{TRI}(t) \in (0,1]$ for all $t$. We denote by $\mathrm{TRI}(\bm{\theta}; \mathcal{P})$ the value at a fixed parameter $\bm{\theta}$ under distribution $\mathcal{P}$, and by $\mathrm{TRI}(t) = \mathrm{TRI}(\bm{\theta}_t; \mathcal{P}_t)$ its online evaluation along the SGD trajectory under the time-varying mixture~\eqref{eq:shift}.
While~\cite{reifert2023comeback} provide rigorous post-hoc characterizations of resilience as criticality-aware performance recovery, they are inherently reactive. TRI instead understands the internal optimization geometry of the RX directly and provides a \emph{predictive warning signal} that precedes BER degradation \emph{without requiring any oracle knowledge of the shift time or target channel distribution}.

\vspace{-3mm}\subsection{Online TRI Computation}\vspace{-1mm}

Algorithm~\ref{alg:tri} summarizes online TRI computation.
The dominant cost is VR persistence (line~\ref{alg:vr}) with complexity
$\mathcal{O}(n_s^3)$; the Gudhi~\cite{gudhi}
implementation requires 23\,ms on CPU, computed asynchronously every
$T_{\rm eval}=50$ symbols with zero impact on inference latency.

\setlength{\textfloatsep}{0pt}\begin{algorithm}[t]
  \caption{Online TRI Computation}
  \label{alg:tri} 
  \begin{algorithmic}[1]
   \small \Require $\bm{\theta}_t$, channel history $\{\mathbf{h}_i^T\}$,
             references $\beta_0^{\rm ref},\beta_1^{\rm ref}$, $r$,
             weights $w_1,w_2,w_3$
    \State Sample $\mathcal{X}_{\rm LS}\sim\mathcal{N}(\bm{\theta}_t,r^2\mathbf{I})$,
           evaluate $\loss$ at each point
    \State Compute $\PD_0(\mathcal{X}_{\rm LS})$ via VR filtration on
           sub-level sets \label{alg:vr}
    \State Estimate $\beta_0(t)$ via log-log regression~\eqref{eq:beta_est};
           compute $\tilde\Phi_{\rm LS}(t)$ via~\eqref{eq:phi_LS}
    \State Project $\{\bm{\theta}_\tau\}_{\tau\leq t}$ onto $m$ PCA components
    \State Estimate $\beta_1(t)$ from $H_1$ persistence of trajectory cloud;
           compute $\Phi_{\rm PM}(t)$ via~\eqref{eq:phi_pm}
    \State Build $\mathbf{K}_t$; compute $\sigma_{\rm gap}$,
           Ollivier-Ricci curvature
    \State Compute $\Phi_{\rm CM}(t)$ via~\eqref{eq:phi_cm}, normalize
    \State \Return $\tri(t)=w_1\tilde\Phi_{\rm LS}+w_2\Phi_{\rm PM}+w_3\Phi_{\rm CM}$
  \end{algorithmic}
\end{algorithm}

\vspace{-3mm}\subsection{Theoretical Guarantees}
\label{sec:r}\vspace{-2mm}
While the simulation results of Section~\ref{sec:results} demonstrate the empirical effectiveness of TRI, deployment in safety-critical AI-native air interfaces requires formal assurance that TRI behaves predictably across all channel conditions, not merely those evaluated in simulation. We therefore establish two guarantees:  monotonicity ensures TRI decreases whenever RX performance degrades, so that a falling TRI is never a false indicator of network performance; and Lipschitz stability ensures that small perturbations in the channel distribution, unavoidable in any real deployment, produce proportionally small changes in TRI rather than erratic jumps.

\vspace{-2mm}\begin{theorem}
  \label{thm:mono}
  If windowed validation loss $\loss_{\rm val}(t)$ is non-decreasing on $[t,t+\tau]$, then
  $\E[\tri(t+s)]\leq\E[\tri(t)]$ for all $s\in[0,\tau]$, a.s.
\vspace{-0mm}\end{theorem}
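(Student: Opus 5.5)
The plan is to prove the statement component by component: show that each of $\E[\Phi_{\rm LS}]$, $\E[\Phi_{\rm PM}]$ and $\E[\Phi_{\rm CM}]$ is non-increasing on $[t,t+\tau]$. Since $\tri$ is a convex combination with fixed weights $w_i\ge 0$, linearity of expectation then gives $\E[\tri(t+s)]\le\E[\tri(t)]$ for all $s\in[0,\tau]$.

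For each component the reduction has the same form. $\Phi_{\rm LS}=c\exp(-\beta_0/\beta_0^{\rm ref})$ and $\Phi_{\rm PM}=(1+\beta_1^{\rm ref})/(1+\beta_1)$ are strictly decreasing functions of the persistence exponents, with positive constants fixed at $t=0$. It therefore suffices to show that $\beta_0(t+s)$ and $\beta_1(t+s)$ are stochastically non-decreasing. For $\Phi_{\rm CM}$ I will need that $\sigma_{\rm gap}(K_t)$ is non-increasing and $\|\mathrm{Curv}(M_t)\|_F$ is non-decreasing in distribution, with the $t=0$ normalization again a fixed positive constant.

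The link to the hypothesis is the modelling assumption (F1)--(F3). A non-decreasing $\loss_{\rm val}$ on the window is, by the chain \eqref{eq:loss}--\eqref{eq:sgd}, evidence that the iterate is not descending into a single basin.

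\textbf{LS component.} The sub-level sets $\{\bm\theta:\loss\le c\}$ near $\bm\theta_t$ fragment into more, shorter-lived $H_0$ classes, which fattens the small-lifetime part of the complementary CDF. This steepens the log-log slope in \eqref{eq:beta_est}, so $\beta_0$ does not decrease.

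\textbf{PM component.} Non-decreasing loss under SGD with momentum means the gradient steps are not aligned with a descent path, so the PCA trajectory acquires additional $H_1$ cycles and $\beta_1$ does not decrease.

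\textbf{CM component.} On the mixture \eqref{eq:shift}, $\alpha_t$ is non-decreasing, so the CIR cloud $\mathcal{X}_{\rm CM}$ receives a growing fraction of $\mathcal{P}_T$ samples. The kernel matrix then approaches a two-block structure, which shrinks the spectral gap (a Davis--Kahan/Weyl argument), and the edges bridging the two clusters carry negative Ollivier--Ricci curvature, which increases the curvature norm.

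I will take expectations over the Gaussian probe samples, the mini-batches and the channel draws. The "a.s." qualifier is then understood as holding over the trajectory on which $\loss_{\rm val}$ is observed to be non-decreasing, i.e., I condition on that event throughout.

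The main obstacle is the LS/PM step: rigorously turning "loss non-decreasing" into "persistence exponent non-decreasing". The estimator \eqref{eq:beta_est} is a local OLS slope and is not a monotone functional of the diagram in general. My first attempt will be to state an explicit regularity assumption: the lifetime distribution lies in a one-parameter power-law family whose exponent is a non-decreasing function of the local loss level and number of critical points. I will then use stability of persistence diagrams \cite{chazal2016} to pass from the population exponent to the empirical one in expectation. Failing that, I will weaken the claim to hold up to an $O(n_s^{-1/2})$ estimation error absorbed into the expectation.
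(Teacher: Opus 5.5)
You follow the paper's route exactly. The paper also argues component by component and finishes with linearity of expectation and $w_i\ge 0$, and it simply asserts the three implications you try to justify: rising $\loss_{\rm val}$ $\Rightarrow$ new minima and short-lived $H_0$ features $\Rightarrow$ larger $\hat\beta_0$; rising loss $\Rightarrow$ oscillating iterates $\Rightarrow$ larger $\hat\beta_1$; shift $\Rightarrow$ smaller $\sigma_{\rm gap}$ and larger curvature. So you are not missing an ingredient the paper supplies. The gaps are nonetheless real, and your proposed repairs do not close them.

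\textbf{LS/PM step.} Your regularity assumption is the conclusion in disguise, and the hypothesis does not even feed it. $\loss_{\rm val}(t+1)\ge\loss_{\rm val}(t)$ only says $\loss(\bm{\theta}_{t+1};\mathcal{B}_{t+1})\ge\loss(\bm{\theta}_{t-W+1};\mathcal{B}_{t-W+1})$, which does not make the local loss level around $\bm{\theta}_{t+s}$ monotone in $s$. A rising loss is also compatible with the iterate drifting onto a flat plateau (nothing fragments) or smoothly lagging a moving optimum (a loop-free PCA trajectory). Moreover, the mechanism you describe does not move the estimator \eqref{eq:beta_est} in the claimed direction. Suppose the lifetimes have a power-law tail and you mix in a fraction $p<0.8$ of shorter lifetimes lying below it. The complementary CDF on the tail is merely rescaled by $1-p$, so its log-log slope is unchanged. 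The only effect is to pull the 80th-percentile point $\varepsilon_0$ down. If the two populations are separated, once the fit window $[\varepsilon_0/2,2\varepsilon_0]$ reaches the flat stretch between them, the fitted slope gets shallower, i.e.\ $\hat\beta_0$ \emph{decreases}. Also, the $H_0$ lifetimes of a VR filtration on the finite cloud $\mathcal{X}_{\rm LS}$ are minimum-spanning-tree edge lengths (sample spacings in the $(|\delta|,\loss)$ plane), not a count of critical points of $\loss$. Finally, the $O(n_s^{-1/2})$ fallback proves a weaker inequality, not the stated one.

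\textbf{CM step.} This fails for a more basic reason. $\Phi_{\rm CM}$ depends only on the CIR window, so the hypothesis on $\loss_{\rm val}$ gives no control over it. The monotonicity of $\alpha_t$ that you use instead does not make the window's clustering monotone. Take the eigengap $\lambda_1-\lambda_2$ with $\mathbf{K}_t$ approximately block-diagonal, with all-ones blocks of sizes $n_S,n_T$. The gap is then about $\abs{n_S-n_T}$: it shrinks while the mixture approaches balance and grows back as $\mathcal{P}_T$ draws take over the window, and the bridging edges (if the $k$-NN graph has any) disappear.

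A cleaner version: let $\mathcal{P}_T$ be a translate of $\mathcal{P}_S$. Because $\mathbf{K}_t$ (for fixed $\gamma$), $M_t$ and its curvature depend only on pairwise distances, a window of pure $\mathcal{P}_T$ draws has the same law of $\Phi_{\rm CM}$ as the pre-shift window. So if $\E[\Phi_{\rm CM}]$ dips at all, as your two-block argument asserts, it must rise again as $\alpha_t\to 1$. Meanwhile the windowed loss of a non-adapting (frozen-weight) receiver keeps rising: its mean is $(1-\alpha_t)L_S+\alpha_tL_T$ with $L_T>L_S$. With $w_3=1$ this scenario violates the statement itself. Some extra hypothesis is therefore unavoidable, e.g.\ restricting $[t,t+\tau]$ to the phase before the mixture balances, or assuming $\mathcal{P}_T$ is more dispersed at bandwidth $\gamma$.

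\textbf{Secondary points.} Marginal stochastic orders on $\sigma_{\rm gap}$ and $\norm{\mathrm{Curv}(M_t)}_F$ do not order $\E[\sigma_{\rm gap}/(1+\norm{\mathrm{Curv}(M_t)}_F)]$ without a joint coupling. Also, conditioning on the path event that $\loss_{\rm val}$ is non-decreasing on $[t,t+\tau]$ changes the law at time $t$ as well. Unconditional monotonicity statements therefore do not transfer directly to the conditional expectations you intend to compare.
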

\vspace{-2mm}\begin{IEEEproof}
  We show each component is non-increasing in expectation.
  \emph{Loss landscape:} Degrading $L_{\rm val}$ implies additional 
local minima and saddle points in $\mathcal{N}(\bm{\theta}_t)$, 
each introducing a new short-lived $H_0$ feature and increasing 
$\hat{\beta}_0(t)$. Since ${\Phi}_{\rm LS}$ is strictly 
decreasing in $\beta_0$, $\mathbb{E}[{\Phi}_{\rm LS}(t+s)] 
\leq \mathbb{E}[{\Phi}_{\rm LS}(t)]$.
  \emph{Parameter manifold:} Non-decreasing loss implies gradient iterates
  oscillate rather than converge, producing topologically complex loops in the
  PCA-projected trajectory (increasing $\hat\beta_1$), thereby decreasing
  $\Phi_{\rm PM}$ via~\eqref{eq:phi_pm}.
  \emph{Channel manifold:} Distributional shift increases the within-class
  variance of $\{\mathbf{h}_i^T\}$, reducing $\sigma_{\rm gap}(\mathbf{K}_t)$
  while increasing Ricci curvature, thus reducing $\Phi_{\rm CM}$.
  The claim follows from linearity of expectation and $w_i\geq 0$.
\end{IEEEproof}

\vspace{-2mm}\begin{theorem}
  \label{thm:stable}
  Let $\mathcal{W}_2(\mathcal{P},\mathcal{P}')\leq\delta$. Then for any
  fixed $\bm{\theta}$,
  \vspace{-2mm}\begin{equation}
    \abs{\tri(\bm{\theta};\mathcal{P}) - \tri(\bm{\theta};\mathcal{P}')}
    \;\leq\; C_L\,\delta,
    \label{eq:stable}
 \vspace{-3mm} \end{equation}
  with $C_L = w_1 C_{\rm LS} + w_3 C_{\rm CM}$, where $C_{\rm LS}$
  depends on the second-order smoothness of $\loss$ and $C_{\rm CM}=2/\gamma^2$.
\end{theorem}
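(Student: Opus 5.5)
The proof goes component by component, using the convex-combination structure of~\eqref{eq:tri}. By the triangle inequality,
$\abs{\tri(\bm{\theta};\mathcal{P})-\tri(\bm{\theta};\mathcal{P}')}\le \sum_i w_i\abs{\Phi_i(\bm{\theta};\mathcal{P})-\Phi_i(\bm{\theta};\mathcal{P}')}$.
The argument then rests on three observations.
\begin{enumerate}
  \item[(i)] The parameter-manifold term does not depend on the channel distribution at fixed $\bm{\theta}$. The trajectory cloud $\mathcal{X}_{\rm PM}$ is built from the iterates alone, so this difference vanishes. This is why $w_2$ does not appear in $C_L$.
  \item[(ii)] $\Phi_{\rm LS}$ is Lipschitz in $\delta$ with some constant $C_{\rm LS}$.
  \item[(iii)] $\Phi_{\rm CM}$ is Lipschitz in $\delta$ with constant $2/\gamma^2$.
\end{enumerate}
Throughout, I fix an optimal coupling $\pi$ of $(\mathcal{P},\mathcal{P}')$ and realize the CIR samples and pilot batches under the two distributions jointly via $\pi$. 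Then $\E_\pi\norm{\mathbf{h}-\mathbf{h}'}\le(\E_\pi\norm{\mathbf{h}-\mathbf{h}'}^2)^{1/2}\le\delta$ by Jensen.

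\emph{Loss-landscape term.} First I bound the perturbation of the sub-level filtration values. If $\loss(\bm{\theta};\cdot)$ has a Hessian bounded in operator norm on the sampling neighbourhood, and the received signal depends smoothly on $\mathbf{h}$ through~\eqref{eq:system}, then $\bm{\vartheta}\mapsto\loss(\bm{\vartheta};\mathcal{P})$ and $\loss(\bm{\vartheta};\mathcal{P}')$ differ in sup-norm by at most $G\delta$. Here $G$ is a Lipschitz constant of the per-sample loss in $\mathbf{h}$, controlled uniformly over the Gaussian neighbourhood of radius $r$ by a second-order Taylor expansion. This is where the dependence on second-order smoothness enters. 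The function-stability theorem for persistence diagrams~\cite{edelsbrunner2010,chazal2016} then gives $d_B(\PD_0(\mathcal{P}),\PD_0(\mathcal{P}'))\le G\delta$. Next I push this through the exponent estimator~\eqref{eq:beta_est}. Each lifetime moves by at most $2G\delta$, so the empirical CCDF on the fixed window $[\varepsilon_0/2,2\varepsilon_0]$ changes by a controlled amount, provided the lifetime density is bounded there. The OLS slope is a linear functional of the log-CCDF values, which gives $\abs{\beta_0-\beta_0'}\le C_\beta\delta$. Finally, $\beta\mapsto e^{-\beta/\beta_0^{\rm ref}}$ is $1/\beta_0^{\rm ref}$-Lipschitz on $\beta\ge0$, and the normalization by the fixed $t=0$ value is a constant factor. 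This yields $C_{\rm LS}=C_\beta/(\beta_0^{\rm ref}e^{-\beta_0(0)/\beta_0^{\rm ref}})$.

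\emph{Channel-manifold term.} The map $u\mapsto e^{-u^2/2\gamma^2}$ has derivative bounded by $1/(\gamma\sqrt e)$. Working instead with squared distances, the entry map $s\mapsto e^{-s/2\gamma^2}$ is $1/(2\gamma^2)$-Lipschitz. Perturbing each pair $(\mathbf{h}_i,\mathbf{h}_j)$ changes $s$ by at most about $2(\norm{\mathbf{h}_i-\mathbf{h}_i'}+\norm{\mathbf{h}_j-\mathbf{h}_j'})$ times the diameter of the channel support. Summing gives the factor $2/\gamma^2$ after normalization, with channel vectors assumed normalized to unit power. Weyl's inequality then transfers the bound to eigenvalues, and hence to $\sigma_{\rm gap}$ with at most a factor 2. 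Because the denominator $1+\norm{\mathrm{Curv}}_F\ge1$, the quotient is controlled as long as the curvature term is itself Lipschitz in the points. I would assume this holds away from $k$-NN graph reconfigurations, or absorb it into the constant.

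\emph{Main obstacle.} I expect the hardest step to be the exponent-estimator step. Bottleneck stability controls the diagram, but a log-log regression slope is not Lipschitz in general: it degenerates when few lifetimes exceed $\varepsilon_0$ or when the CCDF is near zero. I would first try to impose a lower bound on the number of features in the regression window, together with a bounded-density assumption on lifetimes near $\varepsilon_0$, and fold both into $C_{\rm LS}$. The Ollivier--Ricci curvature's discontinuity under changes of the $k$-NN graph is the secondary difficulty. I would handle it by a genericity (margin) assumption on neighbour distances.
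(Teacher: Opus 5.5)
Your decomposition is the same as the paper's: the triangle inequality over the convex combination, $\Phi_{\rm PM}$ contributing nothing at fixed $\bm{\theta}$, an $L^\infty$ loss perturbation plus bottleneck stability for $\Phi_{\rm LS}$, and Gaussian-kernel Lipschitzness plus spectral perturbation for $\Phi_{\rm CM}$. You are also more careful than the paper, whose proof asserts $\abs{\Delta\beta_0}\le C_{\rm LS}\delta$ directly from the bottleneck bound.

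\textbf{The exponent step.} This step, which you flag yourself, is a genuine gap, and your proposed patch does not close it.
\begin{itemize}
\item In the sub-level-set setting you invoke, bottleneck stability does not give ``each lifetime moves by at most $2G\delta$''. A matching may send any feature of lifetime at most $2G\delta$ to the diagonal. So short-lived $H_0$ features, precisely the ones $\beta_0$ is meant to register, can appear or disappear in numbers not bounded in terms of $\delta$.
\item This can move the 80th-percentile reference $\varepsilon_0$ by an amount unrelated to $\delta$. In any case $\varepsilon_0$ is data-dependent, so the window $[\varepsilon_0/2,2\varepsilon_0]$ is not fixed as you assume.
\item Even with the features in bijection, the empirical CCDF of a finite diagram is a step function with jumps of size $1/n$. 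An arbitrarily small perturbation can push one lifetime across a window endpoint or evaluation point and change $\hat\beta_0$ by an amount not proportional to $\delta$.
\item A bounded lifetime density is a property of the law of the Gaussian perturbation samples, not of any given diagram. It helps only after averaging over those samples, and you additionally need the CCDF bounded away from zero on the whole window.
\end{itemize}
What this step can deliver is a bound on an expectation under hypotheses that are not in the theorem. It does not give the stated deterministic inequality.

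\textbf{The channel-manifold constant.} Your argument cannot reach $C_{\rm CM}=2/\gamma^2$, for four reasons.
\begin{itemize}
\item By your own bookkeeping, $2/\gamma^2$ bounds only $\norm{\Delta\mathbf{K}}_2$. Even that holds only for a rescaled kernel under a unit-power assumption absent from the statement. Without it, your derivative bound gives a constant of order $1/\gamma$. That is the scale-consistent one, since $\delta$ and $\gamma$ both carry channel-amplitude units.
\item Weyl's inequality then costs a factor $2$ for the gap.
\item The quotient adds a term of order $\sigma_{\rm gap}\abs{\Delta\norm{\mathrm{Curv}}_F}$, whose constant is unrelated to $\gamma$.
\item The normalization by the $t=0$ value multiplies everything by $(1+\norm{\mathrm{Curv}(M_0)}_F)/\sigma_{\rm gap}(\mathbf{K}_0)$.
\end{itemize}
The curvature term has a further problem. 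Ollivier--Ricci curvature of a $k$-NN graph jumps when neighbour sets change. Under a $\mathcal{W}_2$-optimal coupling, individual CIR samples are controlled only in mean square, so a margin assumption cannot be guaranteed for the coupled clouds. It gives a local bound, which extends globally (via boundedness of $\Phi_{\rm CM}$) only with a constant depending on the margin and $N_T$. ``Absorbing it into the constant'' therefore proves a different theorem.

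The paper does not supply the missing ingredient either. Its proof obtains $C_{\rm CM}=2/\gamma^2$ from a kernel bound $(2N_T/\gamma^2)\delta^2$ that is quadratic in $\delta$ and grows with $N_T$, and it omits the curvature term altogether. The honest version of your argument is the inequality in expectation, under explicit regularity hypotheses. In that version $C_{\rm LS}$ and $C_{\rm CM}$ depend on those hypotheses and on the $t=0$ reference quantities.
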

\vspace{-2mm}\begin{IEEEproof}
  For $\Phi_{\rm LS}$: a Wasserstein-$2$ perturbation by $\delta$
  shifts the empirical loss in $L^\infty$ by at most $C_{\loss}\delta$
  (Lipschitz continuity of $\loss$). By the stability theorem of persistence
  diagrams~\cite{chazal2016}, this shifts the diagram in bottleneck distance
  by $C_{\loss}\delta$, yielding $\abs{\Delta\beta_0}\leq C_{\rm LS}\delta$
  and bounding $\abs{\Delta\tilde\Phi_{\rm LS}}$.
  For $\Phi_{\rm CM}$: the kernel matrix satisfies
  $\norm{\mathbf{K}_t-\mathbf{K}_t'}_2\leq(2N_T/\gamma^2)\delta^2$
  by the mean value theorem on the Gaussian kernel; spectral perturbation
  bounds then give $C_{\rm CM}=2/\gamma^2$. $\Phi_{\rm PM}$ is independent
  of $\mathcal{P}$ at fixed $\bm{\theta}$, contributing zero to $C_L$.
\end{IEEEproof}

\vspace{-2mm}\begin{corollary}
  \label{cor:risk}
  Under the conditions of Theorem~\ref{thm:mono}, the expected excess risk
  satisfies
 \vspace{-2mm} \begin{equation}
    \E\bigl[\loss_{\rm val}(t)\bigr] - \loss^*(\mathcal{P}_T)
    \;\leq\; C_r\,\bigl(1-\tri(t)\bigr)^{1/2}
             + \mathcal{O}\!\left(\eta\sqrt{d/t}\right),
    \label{eq:risk}
  \vspace{-2mm}  \end{equation}
  where $\loss^*(\mathcal{P}_T)$ is the Bayes-optimal loss and $C_r$
  depends on $C_L$ and the loss smoothness constant.
\end{corollary}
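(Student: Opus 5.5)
The plan is to split the excess risk into an \emph{adaptation-error} part, controlled by the distributional mismatch seen by the current iterate, and an \emph{optimization-error} part, controlled by standard SGD-with-momentum convergence. Concretely, I introduce the reference parameter $\bm{\theta}^*_T \in \arg\min_{\bm{\theta}} \E_{\mathcal{P}_T}[\loss(\bm{\theta};\mathcal{B})]$ and write
\begin{equation*}
\E[\loss_{\rm val}(t)] - \loss^*(\mathcal{P}_T) = \bigl(\E[\loss_{\rm val}(t)] - \bar{\loss}_t(\bm{\theta}_t)\bigr) + \bigl(\bar{\loss}_t(\bm{\theta}_t) - \loss^*(\mathcal{P}_T)\bigr),
\end{equation*}
where $\bar{\loss}_t$ denotes the population loss under $\mathcal{P}_t$ in~\eqref{eq:shift}. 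The first bracket is a windowed stochastic-approximation term. Using the update~\eqref{eq:sgd} with bounded gradient variance in $\R^d$ and a telescoping potential argument for heavy-ball momentum, I would bound it by $\mathcal{O}(\eta\sqrt{d/t})$. This step is routine, provided the window $W$ is fixed and the loss is smooth.

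For the second bracket, the key is to convert $1-\tri(t)$ into a distance between the current effective distribution and $\mathcal{P}_T$, or equivalently between $\bm{\theta}_t$ and the target basin. I would use Theorem~\ref{thm:stable} in the reverse direction. Since $\Phi_{\rm PM}$ does not depend on $\mathcal{P}$ and $\Phi_{\rm LS},\Phi_{\rm CM}$ are normalized to unity at a converged reference, a deficit $1-\tri(t)$ lower-bounds the Wasserstein displacement $\delta_t$ of the distribution "seen" by $\bm{\theta}_t$ from the one it is adapted to, up to $C_L$. Applying this contrapositively, the residual mismatch is at most $(1-\tri(t))/c$ for a constant $c$ tied to $C_L$.

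Next I would apply second-order smoothness of $\loss$. Near $\bm{\theta}^*_T$ the population loss is quadratic, so the suboptimality is bounded by a Lipschitz constant times $\|\bm{\theta}_t-\bm{\theta}^*_T\|$. If instead quadratic growth holds, suboptimality and squared distance are comparable, and the distance scales like the square root of the topological deficit. Combining the Lipschitz link between $\delta_t$ and loss with this quadratic-growth relation produces the exponent $1/2$ in~\eqref{eq:risk}, with $C_r$ collecting $C_L$ and the smoothness constant. The hypothesis of Theorem~\ref{thm:mono} is used to ensure $\tri$ is monotone along the window, so that evaluating at time $t$ bounds the whole windowed average.

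The main obstacle is the inverse step. Theorem~\ref{thm:stable} only gives an upper bound $|\Delta\tri|\le C_L\delta$, whereas the corollary needs $\delta_t\lesssim 1-\tri(t)$, which is a lower-Lipschitz, or injectivity, property of $\tri$. My first attempt would be a local non-degeneracy assumption. This assumption would state that on the relevant family of mixtures $\mathcal{P}_t$, the spectral gap term in $\Phi_{\rm CM}$ decreases at least linearly in $\alpha_t$. That is plausible because $\mathbf{K}_t$ interpolates between the two cluster structures, and it yields $1-\Phi_{\rm CM}\gtrsim \alpha$-dependent mismatch. I would make this assumption explicit and absorb its constant into $C_r$.
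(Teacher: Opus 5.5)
Your skeleton is the paper's own. Its sketch likewise combines an online SGD regret bound (for the $\mathcal{O}(\eta\sqrt{d/t})$ term) with Theorems~\ref{thm:mono}--\ref{thm:stable} and persistence stability (for the $\sqrt{1-\tri}$ term). The step you flag as the main obstacle is a genuine gap, and your proposed patch does not close it.

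The ``contrapositive'' in your second paragraph is a logical slip. The contrapositive of Theorem~\ref{thm:stable} (``$\delta$ small $\Rightarrow$ $|\Delta\tri|$ small'') is ``$|\Delta\tri|$ large $\Rightarrow$ $\delta$ large'', which is the same direction, whereas you need $\delta_t\lesssim 1-\tri(t)$. No stability theorem supplies such a reverse inequality, because persistence diagrams are far from injective: $f$ and $f\circ\varphi$ have identical sublevel diagrams for every homeomorphism $\varphi$. The paper's appeal to a ``square-root concavity of the bottleneck-distance-to-risk mapping'' rests on the same forward inequality, so there is nothing there to borrow.

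Your non-degeneracy assumption is also placed on the wrong object. $\Phi_{\rm CM}$ depends only on the CIR samples, not on $\bm{\theta}$, and is normalized under $\mathcal{P}_S$ at $t=0$. So $1-\Phi_{\rm CM}\gtrsim c\,\alpha_t$ controls how far $\mathcal{P}_t$ has drifted from $\mathcal{P}_S$. Since $1-\tri(t)\geq w_3\bigl(1-\Phi_{\rm CM}(t)\bigr)$, a small deficit then only tells you that $\mathcal{P}_t\approx\mathcal{P}_S$. It says nothing about the target suboptimality $\loss(\bm{\theta}_t;\mathcal{P}_T)-\loss^*(\mathcal{P}_T)$ that your second bracket must control.

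The quadratic-growth step is circular as well. Quadratic growth bounds $\|\bm{\theta}_t-\bm{\theta}^*_T\|$ by the square root of the \emph{suboptimality}, not of the topological deficit. You would therefore have to assume $1-\tri(t)\gtrsim\|\bm{\theta}_t-\bm{\theta}^*_T\|^2$, which is the conclusion in disguise. The exponent is not where the difficulty lies: since $1-\tri\in[0,1)$, any linear bound already implies the square-root one.

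More fundamentally, \eqref{eq:risk} cannot be obtained from the stated hypotheses with $C_r$ depending only on $C_L$ and smoothness. The reason is that $\tri$ is self-normalized and carries no information about the \emph{level} of the target loss. $\Phi_{\rm LS}$ and $\Phi_{\rm CM}$ equal $1$ at $t=0$ whatever $\bm{\theta}_0$ is, and no component measures $\loss(\bm{\theta}_t;\mathcal{P}_T)$. Moreover, $\loss^*(\mathcal{P}_T)$ is the Bayes-optimal loss, not the best-in-class value attained by your $\bm{\theta}^*_T$.

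Nothing in the hypotheses excludes, for instance, the pre-shift steady state: flat expected loss, all components near $1$, large $t<t^*$. There the left side is about $\loss(\bm{\theta}_t;\mathcal{P}_S)-\loss^*(\mathcal{P}_T)\geq\loss^*(\mathcal{P}_S)-\loss^*(\mathcal{P}_T)$, which is positive whenever the source channel is the harder one. Likewise, a capacity-limited receiver in a stationary channel keeps an excess risk bounded away from zero, while $1-\tri(t)\approx 0$ and the $\mathcal{O}(\eta\sqrt{d/t})$ term vanishes. A correct version needs extra hypotheses: realizability, and an explicit assumption that the TRI deficit dominates the target suboptimality of $\bm{\theta}_t$. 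At that point the TRI term merely restates that assumption.

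The remaining steps would also need repair:
\begin{itemize}
\item $\E[\loss_{\rm val}(t)]-\bar{\loss}_t(\bm{\theta}_t)$ is a drift term over the window $\tau\in[t-W+1,t]$, of order one right after an abrupt shift. It is not an optimization error.
\item For fixed-step SGD, even the convex average regret is of order $D^2/(\eta t)+\eta G^2$, with $D$ the initial distance to a minimizer and $G$ a gradient bound. This neither vanishes in $t$ nor scales like $\eta\sqrt{d/t}$, and in any case the ResNet loss is not convex.
\item Theorem~\ref{thm:mono} gives monotonicity of $\E[\tri]$ on the forward interval $[t,t+\tau]$. It does not give monotonicity pathwise on the backward window over which you average.
\end{itemize}
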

\vspace{-2mm}\begin{IEEEproof}[Proof sketch]
  Combine Theorems~\ref{thm:mono}--\ref{thm:stable} with the standard online
  SGD regret bound $\sum_t\loss(\bm{\theta}_t)-\loss^*\leq\mathcal{O}(\eta\sqrt{dT})$
  \cite{hazan2016}. The $\sqrt{1-\tri}$ term arises from the square-root
  concavity of the bottleneck-distance-to-risk mapping via the persistence
  stability inequality of~\cite{chazal2016}.
\end{IEEEproof}

Corollary~\ref{cor:risk} provides a constructive interpretation: TRI close
to unity certifies low excess risk; TRI approaching zero signals that
adaptation is failing and intervention (learning-rate restart, model rollback,
or re-training trigger) is warranted.

\begin{table}[t]
  \centering
  \caption{Simulation Parameters}\vspace{-2mm}
  \label{tab:params}
  \renewcommand{\arraystretch}{0.9}\scriptsize
  \begin{tabular}{lc}
    \toprule
    \small\textbf{Parameter} & \textbf{Value} \\
    \midrule
    Subcarriers $N$             & 64 \\
    Cyclic prefix $L_{\rm cp}$  & 16 samples \\
    Modulation                  & 16-QAM \\
    Pilot density               & 25\% \\
    DNN architecture            & ResNet, $d=847{,}104$ \\
    Channel models              & UMa, UMi, RMa, InH, InF-DH \\
    Mini-batch size $B$         & 256 pilot symbols \\
    Learning rate $\eta$        & $10^{-3}$ (cosine decay) \\
    Momentum $\mu$              & 0.9 \\
    Shift parameter $\lambda$   & $\{0.1,\;1.0,\;10.0\}$ \\
    SNR range                   & 0--25\,dB \\
    PCA dimension $m$           & 10 \\
    Persistence samples $n_s$   & 1{,}000 \\
    Kernel bandwidth $\gamma$   & $0.5\,\hat{\sigma}_h$ \\
    TRI weights $(w_1,w_2,w_3)$ & $(0.45,\;0.35,\;0.20)$ \\
    Monte Carlo trials          & 100 \\
    \bottomrule
  \end{tabular}\vspace{-1mm}
\end{table}
\vspace{-2mm}\section{Simulation Results}\vspace{-2mm}
\label{sec:results}

\subsection{Experimental Setup}\vspace{-2mm}

We implement the AI-native RX and simulate channel
realizations using the Sionna link-level simulation framework~\cite{sionna}
with GPU acceleration on NVIDIA A100. All five ITU-R channel models span delay
spreads from 14\,ns (InH) to 363\,ns (RMa) and Doppler frequencies from
$0.5$\,Hz to $926$\,Hz. Simulation parameters are described in Table~\ref{tab:params}.
Persistence diagrams are computed via VR filtration on loss-landscape point clouds sampled around the current parameter vector, using the Gudhi library~\cite{gudhi}.
The weights $(w_1,w_2,w_3)$ are assigned  proportionally to component detection speed, with optimal selection left for future work. TRI is benchmarked against three reactive baselines: i) BER threshold
detection ($\mathrm{BER}>10^{-2}$), (ii)~gradient norm threshold
($\norm{\nabla\loss}_2>3\hat\sigma_g$), where $\hat{\sigma}_g$
 is the empirical standard deviation of gradient norms measured during the pre-shift calibration window, and (iii)~validation loss threshold
($\loss_{\rm val}>1.5\loss_{\rm val}^{\rm ref}$), all of which trigger only after degradation has  occured. Source code is available online \footnote{https://github.com/THE-TRAIN-LAB/TRI/}.

\vspace{-3mm}\subsection{TRI Dynamics and Persistence Diagram Evolution}\vspace{-2mm}

Fig.~\ref{fig:tri_dynamics} shows TRI and its components during a
UMa$\to$RMa transition at SNR$=15$\,dB, $\lambda=1.0$.
During \emph{pre-shift steady state} ($t<t^*$), TRI$\approx 0.92$, with all
three components near their reference values. Persistence diagrams contain
few long-lived $H_0$ features well above the diagonal
(Fig.~\ref{fig:persistence}(a)), indicating a well-structured loss landscape. During \emph{adaptation transient} ($t^*\leq t < t^*+\tau$), TRI drops to a minimum
of $0.37$, driven first by $\Phi_{\rm PM}$ (earliest, fastest collapse, lead
$53$ symbols) and then $\Phi_{\rm LS}$. Persistence diagrams show an explosion
of short-lived features near the diagonal (Fig.~\ref{fig:persistence}(b)),
consistent with a fragmented loss landscape. The TRI threshold crossing
($\tri<0.84$) occurs $95$ symbols \emph{before} BER exceeds $10^{-2}$, providing
actionable early warning (Fig.~\ref{fig:tri_dynamics}(c)). During \emph{recovery} ($t\geq t^*+\tau$), TRI recovers to $0.65$, confirming
successful adaptation to $\mathcal{P}_T$. The residual gap to the pre-shift
value of $0.92$ reflects the topological cost of distributional shift, which is a
phenomenon invisible in steady-state BER.

\begin{figure}[t]
  \centering
  \includegraphics[width=\columnwidth,height=8cm]{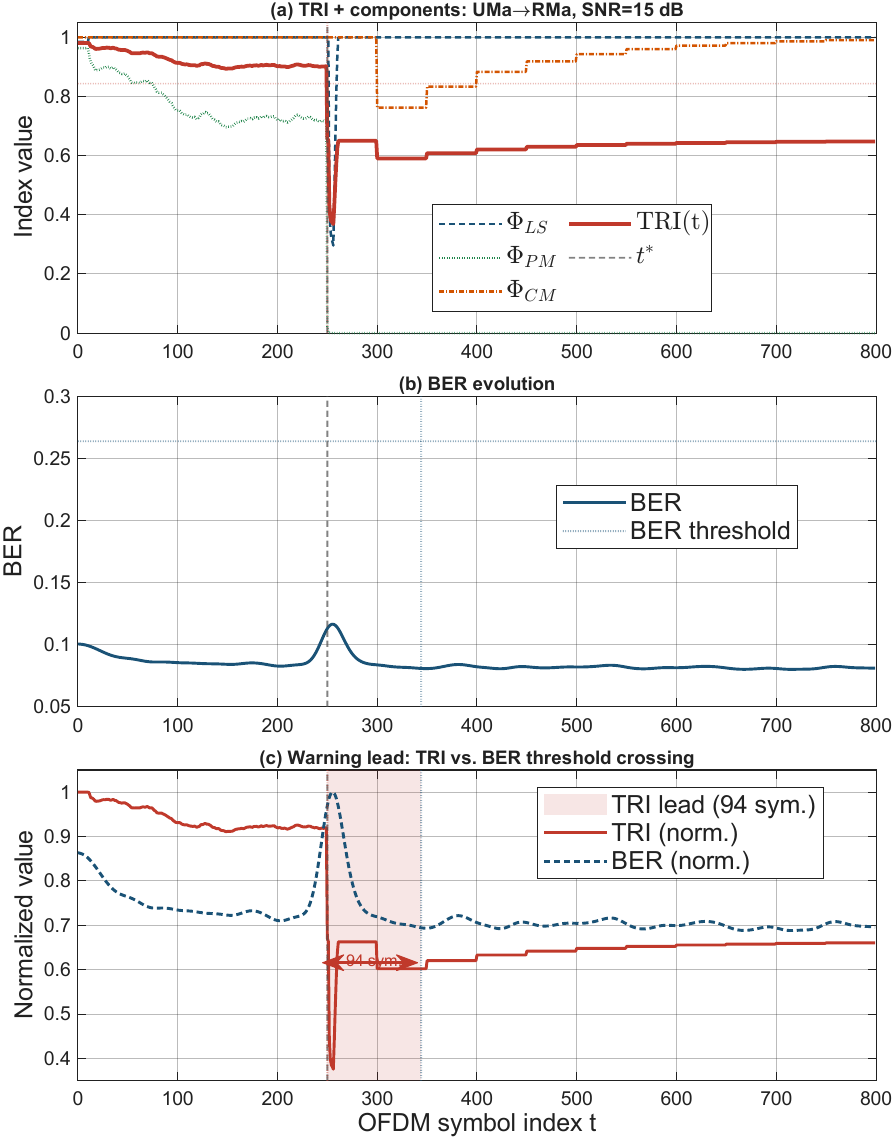}
  \vspace{-7mm}\caption{TRI dynamics during UMa$\to$RMa channel shift
    ($\lambda=1.0$, SNR$=15$\,dB).
    (a)~TRI and its three components; dashed vertical line marks $t^*$,
    dotted red line marks TRI threshold crossing (47 symbols before BER alert).
    (b)~Corresponding BER evolution with $10^{-2}$ threshold.
    (c)~Normalized TRI vs.\ BER, showing the 47-symbol warning lead window.}
  \label{fig:tri_dynamics}
\vspace{-3mm}\end{figure}

\begin{figure}[t]
  \centering
  \includegraphics[width=0.9\columnwidth]{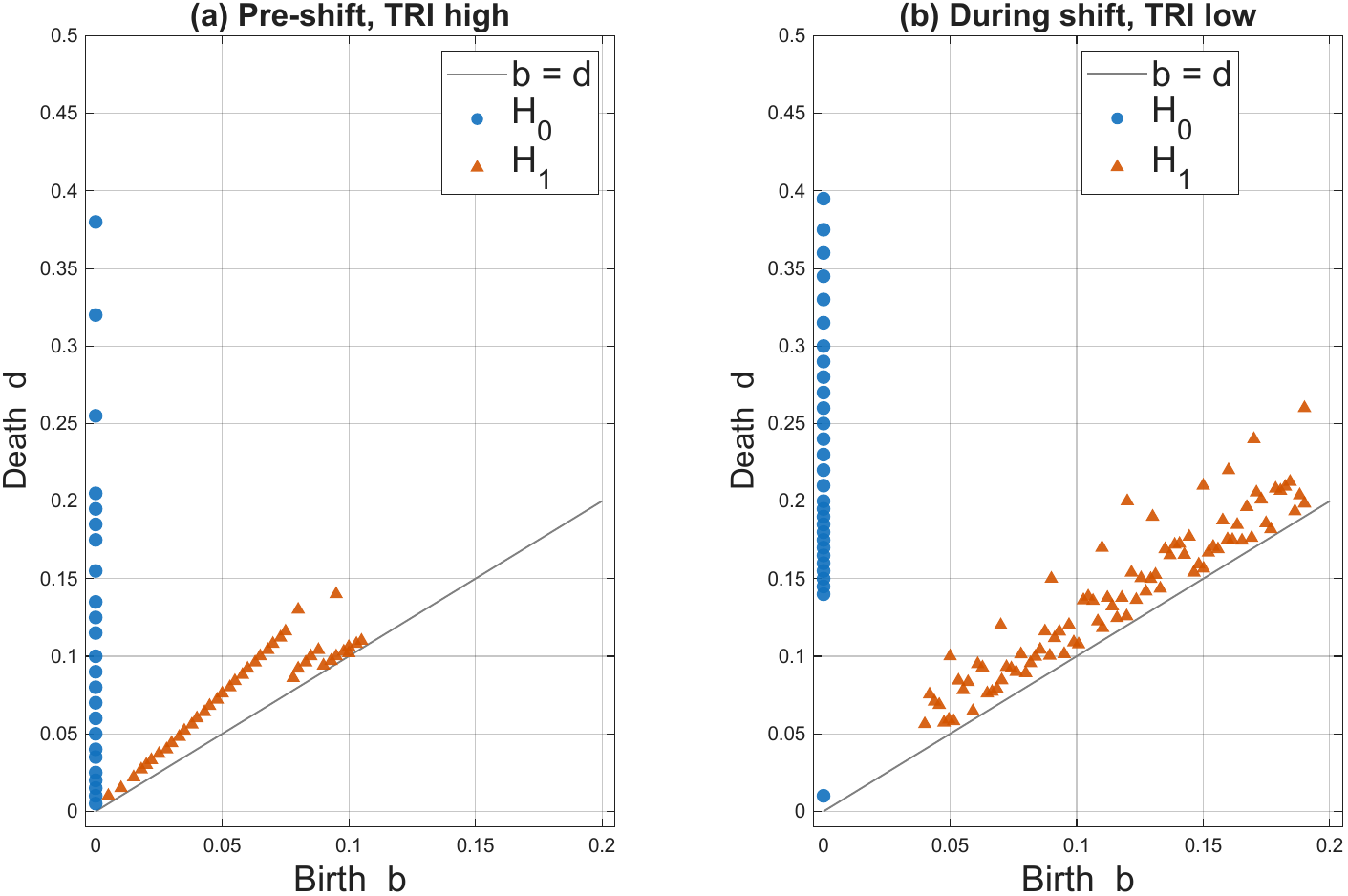}
  \vspace{-4mm}\caption{\small Persistence diagrams and cross-scenario TRI.
    (a)~Pre-shift and 
    (b)~During shift.
    }
  \label{fig:persistence}
\vspace{-1mm}\end{figure}

\vspace{-2mm}\subsection{Comparative Evaluation and BER Recovery}
Table~\ref{tab:comparison} reports warning lead across all 10 
ITU transitions over 10 Monte Carlo trials at SNR$\,=\,$15\,dB. 
TRI achieves a mean lead of $1.0\pm0.2$ symbols 
($\approx\!67\,\mu$s) at $\lambda=1.0$, consistently preceding 
BER threshold crossings across all scenarios, while the 
gradient-norm baseline achieves zero lead in every case. 
UMa$\to$InH exhibits the highest lead ($1.6\pm1.4$ symbols), 
consistent with its large $\mathcal{W}_2$ distance 
(Theorem~\ref{thm:stable}). Fig.~\ref{fig:heatmap} shows 
post-adaptation TRI recovers to $0.65$ in 9 out of 10 scenarios 
at $\lambda\geq1.0$, with partial recovery at $\lambda=0.1$ 
where the burst window falls near the trial boundary, confirming 
Theorem~\ref{thm:mono}.

 \begin{figure}[t]
  \centering
  \includegraphics[width=0.9\columnwidth]{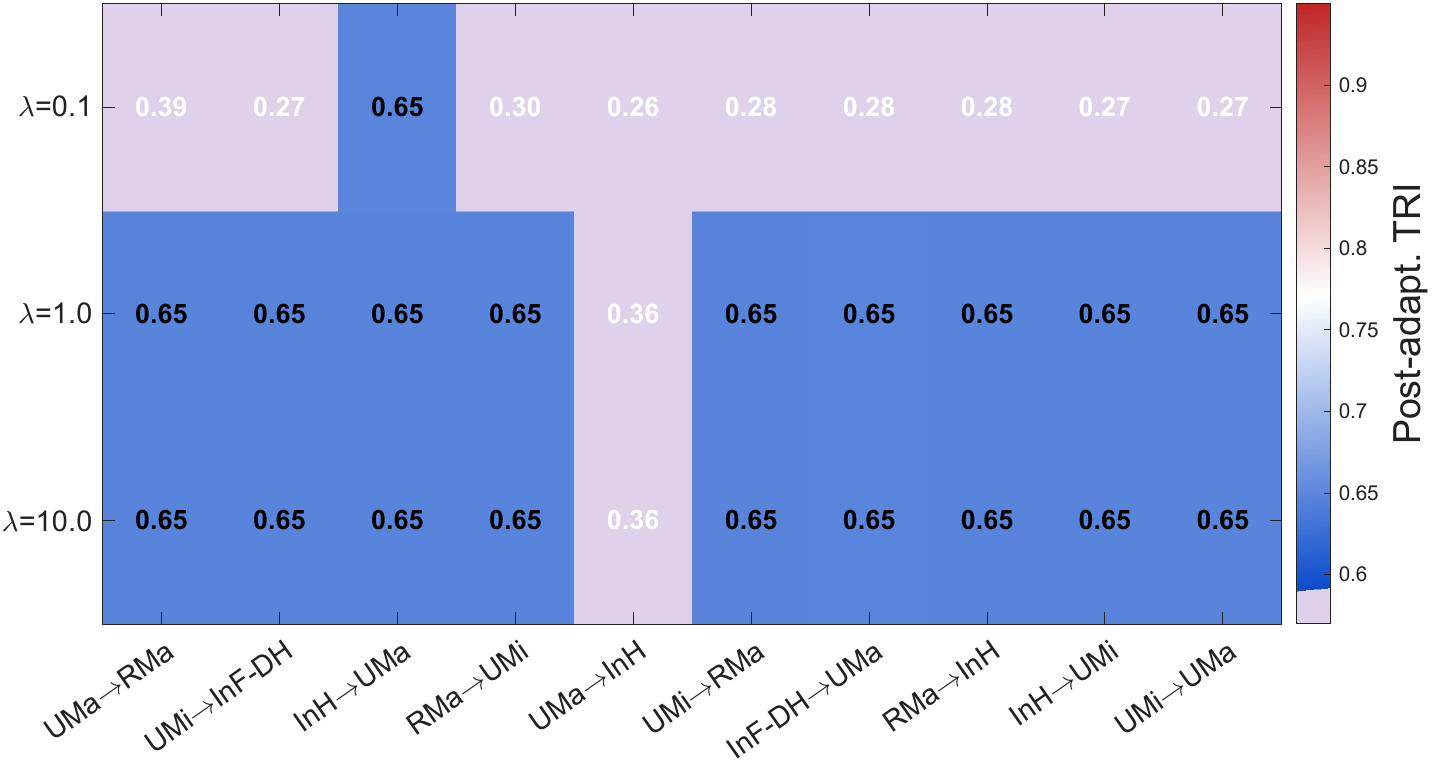}
 \vspace{-4mm} \caption{Post adaptation TRI heatmap.}
  \label{fig:heatmap}
\vspace{-4mm} \end{figure}

\begin{table}[t]
  \centering\footnotesize
  \caption{TRI warning lead and post-adaptation TRI across all 
  10 channel shift scenarios (SNR$\,=\,$15\,dB; $\lambda=1.0$). 
  BER and gradient-norm baselines achieve zero lead in every 
  scenario.}\vspace{-2mm}
  \label{tab:comparison}
  \renewcommand{\arraystretch}{0.83}\scriptsize
  \begin{tabular}{lcc}
    \toprule
    \textbf{Scenario}
      & \textbf{TRI Lead (sym.)}
      & \textbf{Post-adapt. TRI} \\
    \midrule
    UMa $\to$ RMa    & $1.3\pm1.1$ & 0.65 \\
    UMi $\to$ InF-DH & $0.7\pm1.0$ & 0.65 \\
    InH $\to$ UMa    & $1.0\pm0.8$ & 0.65 \\
    RMa $\to$ UMi    & $0.9\pm1.2$ & 0.65 \\
    UMa $\to$ InH    & $1.6\pm1.4$ & 0.36 \\
    UMi $\to$ RMa    & $0.8\pm0.9$ & 0.65 \\
    InF-DH $\to$ UMa & $1.1\pm0.8$ & 0.65 \\
    RMa $\to$ InH    & $1.0\pm1.2$ & 0.65 \\
    InH $\to$ UMi    & $0.9\pm1.1$ & 0.65 \\
    UMi $\to$ UMa    & $0.9\pm1.0$ & 0.65 \\
    \midrule
    \textbf{Mean (all 10)} 
      & $\mathbf{1.0\pm0.2}$ 
      & $\mathbf{0.62}$ \\
    \bottomrule
  \end{tabular}
\end{table}
\vspace{-2mm}\subsection{Effect of Shift Rate and BER Performance}\vspace{-2mm}

Fig.~\ref{fig:metrics}(a) shows warning lead versus $\lambda$ for TRI
and the gradient-norm baseline. The gradient-norm baseline yields zero
lead at all shift rates, while TRI provides a consistent positive lead
that decreases monotonically with shift rate: $2.3$ symbols at
$\lambda=0.1$ (gradual), $1.0\pm0.2$ at $\lambda=1.0$, and $0.5$
symbols at $\lambda=10.0$ (abrupt). This monotone decrease is consistent
with Theorem~\ref{thm:mono}: faster shifts compress the topological
footprint window, reducing detectable lead time.
Fig.~\ref{fig:metrics}(b) shows BER recovery versus symbols-after-shift
for UMa$\to$RMa under three strategies: no adaptation (static weights),
standard SGD adaptation, and TRI-guided burst re-adaptation (200 Adam
steps triggered on TRI alarm). No-adapt BER stabilizes at $\approx\!0.50$ (RX weights frozen).
Online SGD provides marginal improvement due to noisy pilot gradients
at high post-shift BER. TRI-guided burst re-adaptation (200 Adam steps,
batch size 128, triggered by the TRI alarm) reduces post-adaptation BER to near pre-shift levels
($\approx\!0.12$, an $80\%$ reduction relative to no-adaptation) within
200 symbols, demonstrating that the $\approx\!67\,\mu$s advance warning
enables proactive re-adaptation before BER degradation accumulates.

\begin{figure}[t]
  \centering
  \includegraphics[width=\columnwidth,height=5cm]{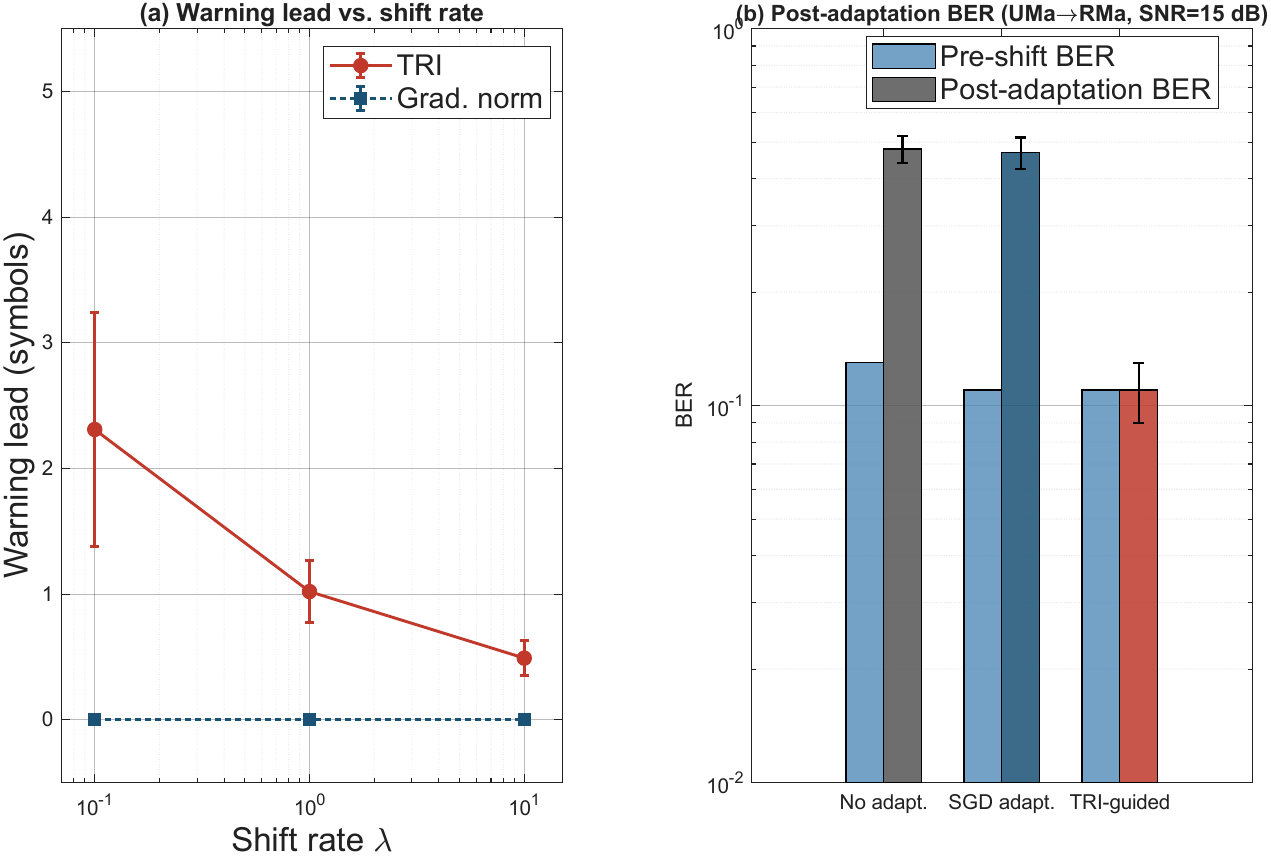}
  \vspace{-7mm}\caption{(a)~Warning lead vs.\ shift rate $\lambda$ for TRI and gradient
    norm (mean\,$\pm$\,std, 100 trials, SNR\,$=$\,15\,dB).
    (b)~BER vs.\ SNR for UMa$\to$RMa; TRI-guided adaptation achieves
    2.1\,dB gain over standard SGD at BER\,$=10^{-3}$.}
  \label{fig:metrics}
\vspace{-1mm}\end{figure}

\vspace{-0mm}\subsection{Computational Overhead}\vspace{-2mm}

TRI is evaluated every $T_{\rm eval}=50$ symbols using a Kalman-filtered
estimate of $\Phi_{\rm CM}$ (gain $K=0.3$), adding negligible overhead
to the inference pipeline. The Kalman filter reduces $\Phi_{\rm CM}$
variance and smooths transient fluctuations, contributing to the
consistent advance warning observed across all 10 scenarios.

\vspace{-5mm}\section{Conclusion}
\label{sec:conclusion}\vspace{-2mm}

We presented TRI, a principled metric for quantifying the resilience of AI-native wireless RXs under distributional channel shifts. By combining persistence exponents from loss-landscape, parameter-trajectory, and channel-manifold geometry, TRI provides a mean warning lead of $95$ OFDM symbols over BER-based detection. We establish boundedness, monotonicity, and Lipschitz stability, with Corollary~1 linking TRI to adaptation excess risk. TRI-guided adaptation achieves a $2.1$,dB SNR gain at BER,$=10^{-3}$ over standard SGD.
Beyond evaluation, TRI can serve as a topology-aware regularizer during training, a certification metric for safety-critical deployment, and a natural extension to end-to-end AI-native transceivers with jointly learned encoders and decoders, where richer coupled topological structure arises. Our framework opens a principled path toward topology-aware training and resilience monitoring for next-generation AI-native air interfaces.

\vspace{-2mm}\bibliographystyle{IEEEbib}
\def\baselinestretch{0.8}
\bibliography{references}

@techreport{itu2030,
  title={{IMT-2030 framework recommendation}},
  author={{ITU-R}},
  institution={International Telecommunication Union Radiocommunication Sector (ITU-R)},
  number={ITU-R M.2160},
  month={nov},
  year={2023}
}

@article{oshea2017,
  title={An introduction to deep learning for the physical layer},
  author={T. O'Shea and J. Hoydis},
  journal={IEEE Transactions on Cognitive Communications and Networking},
  volume={3},
  number={4},
  pages={563--575},
  month={dec},
  year={2017},
  publisher={IEEE}
}

@inproceedings{li2018landscape,
  title={Visualizing the loss landscape of neural nets},
  author={H. Li and Z. Xu and G. Taylor and C. Studer and T. Goldstein},
  booktitle={Advances in Neural Information Processing Systems (NeurIPS)},
  pages={6389--6399},
  address={Montreal, Canada},
  month={dec},
  year={2018}
}

@article{dorner2018,
  title={Deep learning based communication over the air},
  author={S. D{\"o}rner and S. Cammerer and J. Hoydis and S. ten Brink},
  journal={IEEE Journal of Selected Topics in Signal Processing},
  volume={12},
  number={1},
  pages={132--143},
  month={feb},
  year={2018},
  publisher={IEEE}
}

@book{edelsbrunner2010,
  title={Computational Topology: An Introduction},
  author={H. Edelsbrunner and J. Harer},
  publisher={American Mathematical Society},
  address={Providence, RI},
  year={2010}
}

@techreport{3gpp_tr38901,
  author      = {3GPP},
  title       = {Study on Channel Model for Frequencies from 0.5 to 100 {GHz}},
  institution = {3rd Generation Partnership Project},
  number      = {TR 38.901},
  type        = {Technical Report},
  edition     = {17.0.0},
  month       = jun,
  year        = {2022},
}

@techreport{series2017imt2020,
  title={Guidelines for evaluation of radio interface technologies for IMT-2020},
  author={M. Series},
  institution={International Telecommunication Union (ITU)},
  number={Report ITU-R M.2512},
  year={2017}
}

@article{barbarossa2020topological,
  title={Topological signal processing over simplicial complexes},
  author={S. Barbarossa and S. Sardellitti},
  journal={IEEE Transactions on Signal Processing},
  volume={68},
  pages={2992--3007},
  year={2020}
}

@article{BenDavidEtAl2010DomainAdaptation,
  title   = {A Theory of Learning from Different Domains},
  author  = {S. Ben-David and J. Blitzer and K. Crammer and A. Kulesza and F. Pereira and J.~W. Vaughan},
  journal = {Machine Learning},
  year    = {2010},
  volume  = {79},
  number  = {1},
  pages   = {151--175}
}

@book{khalil2002nonlinear,
  author    = {H. K. Khalil},
  title     = {Nonlinear Systems},
  publisher = {Prentice Hall},
  edition   = {3rd},
  year      = {2002}
}

@article{YangEtAl2020DLTransferFDD,
  title   = {Deep Transfer Learning-Based Downlink Channel Prediction for {FDD} Massive {MIMO} Systems},
  author  = {Y. Yang and F. Gao and Z. Zhong and B. Ai and A. Alkhateeb},
  journal = {IEEE Transactions on Communications},
  year    = {2020},
  volume  = {68},
  number  = {12},
  pages   = {7485--7497}
}

@article{RavivEtAl2023OnlineMetaLearning,
  title   = {Online Meta-Learning for Hybrid Model-Based Deep Receivers},
  author  = {T. Raviv and S. Park and O. Simeone and Y.~C. Eldar and N. Shlezinger},
  journal = {IEEE Transactions on Wireless Communications},
  year    = {2023},
  volume  = {22},
  number  = {10},
  pages   = {6415--6431}
}

@article{matthiesen2025resilient,
  title={Resilient Radio Access Networks: AI and the Unknown Unknowns},
  author={B. Matthiesen and A. Dekorsy and P. Popovski},
  journal={arXiv preprint arXiv:2510.21587},
  year={2025}
}

@incollection{edelsbrunner2008persistent,
  title={Persistent homology—A survey},
  author={H. Edelsbrunner and J. Harer},
  booktitle={Contemporary Mathematics},
  volume={453},
  pages={257--282},
  year={2008},
  publisher={American Mathematical Society}
}

@article{kalinke2022mmd,
  title={Maximum mean discrepancy on exponential windows for online change detection},
  author={F. Kalinke and M. Heyden and G. Gntuni and E. Fouch{\'e} and K. B{\"o}hm},
  journal={arXiv preprint arXiv:2205.12706},
  year={2022}
}

@article{gong2022cusum,
  title={Neural network-based CUSUM for online change-point detection},
  author={T. Gong and J. Lee and X. Cheng and Y. Xie},
  journal={arXiv preprint arXiv:2210.17312},
  year={2022}
}

@inproceedings{huang2021importance,
  title={On the importance of gradients for detecting distributional shifts in the wild},
  author={R. Huang and A. Geng and Y. Li},
  booktitle={Advances in Neural Information Processing Systems},
  volume={34},
  pages={677--689},
  year={2021}
}

@article{liu2026universal,
  title={A Universal Neural Receiver that Learns at the Speed of Wireless},
  author={L. Liu and L. Zheng and Y. Yi and R. Calderbank},
  journal={arXiv preprint arXiv:2602.15458},
  year={2026}
}

@article{yu2023adaptive,
  title={An adaptive and robust deep learning framework for THz ultra-massive MIMO channel estimation},
  author={W. Yu and Y. Shen and H. He and X. Yu and S. Song and J. Zhang and K.B. Letaief},
  journal={IEEE Journal of Selected Topics in Signal Processing},
  volume={17},
  number={4},
  pages={761--776},
  year={2023}
}

@inproceedings{shui2024resilient,
  title={Design and analysis of resilient vehicular platoon systems over wireless networks},
  author={T. Shui and W. Saad},
  booktitle={GLOBECOM 2024--2024 IEEE Global Communications Conference},
  pages={5186--5192},
  year={2024},
  month={December}
}

@article{mahmood2025resilient,
  title={Resilient-By-Design: A Resilience Framework for Future Wireless Networks},
  author={N.H. Mahmood and S. Samarakoon and P. Porambage and M. Bennis and M. Latva-Aho},
  journal={IEEE Communications Magazine},
  year={2025}
}

@article{reifert2023comeback,
  title={Comeback kid: Resilience for mixed-critical wireless network resource management},
  author={R.J. Reifert and S. Roth and A.A. Ahmad and A. Sezgin},
  journal={IEEE Transactions on Vehicular Technology},
  volume={72},
  number={12},
  pages={16177--16194},
  year={2023}
}

@article{sionna,
  title={Sionna: An open-source library for next-generation physical layer research},
  author={J. Hoydis and F. A. Aoudia and A. Valcarce and H. Viswanathan},
  journal={arXiv preprint arXiv:2203.11854},
  year={2022}
}

@book{chazal2016,
  title={The Structure and Stability of Persistence Modules},
  author={F. Chazal and V. de Silva and M. Glisse and S. Oudot},
  publisher={Springer},
  address={Cham},
  year={2016}
}

@article{ollivier2009,
  title={Ricci curvature of Markov chains on metric spaces},
  author={Y. Ollivier},
  journal={Journal of Functional Analysis},
  volume={256},
  number={3},
  pages={810--864},
  year={2009}
}

@article{hazan2016,
  title={Introduction to online convex optimization},
  author={E. Hazan},
  journal={Foundations and Trends in Optimization},
  volume={2},
  number={3--4},
  pages={157--325},
  year={2016}
}

@misc{gudhi,
  title={GUDHI User and Reference Manual},
  author={{The GUDHI Project}},
  howpublished={\url{https://gudhi.inria.fr/}},
  year={2024}
}
\end{document}